\DeclareMathOperator*{\argmax}{arg\,max}
\def\R{\mathbb{R}}
\def\N{\mathbb{N}}
\newtheorem{observation}{Observation}
\newcommand{\din}[1][M]{deg^M_{in}}
\newcommand{\dout}[1][M]{deg^M_{out}}
\def\UWCARPOOL{Unweighted Carpool Matching}
\def\CARPOOL{Maximum Carpool Ma\-tching}
\def\FIXEDCARPOOL{Fixed Maximum Carpool Matching}
\tikzset{default style/.style={
	on grid, 
	auto, 
	thick,
}}
\tikzset{default node/.style={
	draw, 
	circle,
	inner sep=0mm,
	minimum size=5mm,
	very thick,
	font=\small,
	black!70,
}}
\tikzset{label/.style={
	draw=none,
	sloped,
}}
\tikzset{label above/.style={
	label,
	midway,
	above=-1mm,
}}
\tikzset{label below/.style={
	label,
	midway,
	below=-1mm,
}}
\title{Approximation Algorithms for the Maximum Carpool Matching Problem}
\author{
Reuven Bar-Yehuda\inst{1}
\and
Gilad Kutiel\inst{1}
\and
Dror Rawitz\inst{2}
}
\institute{
Department of Computer Science, Technion, Haifa, Israel
\\
\email{\{reuven, gkutiel\}@cs.technion.ac.il}
\and
Faculty of Engineering, Bar Ilan University, Ramt-Gan, Israel
\\
\email{dror.rawitz@biu.ac.il}
}
\date{}
\author{
Gilad Kutiel\inst{1}
}
\institute{
Department of Computer Science, Technion, Haifa, Israel
\\
\email{gkutiel@cs.technion.ac.il}
}
\begin{document}
\maketitle

\begin{abstract}
The \textsc{\CARPOOL{}} problem is a star packing problem in directed graphs.
Formally, given a directed graph $G = (V, A)$,
a capacity function $ c: V \rightarrow \N $,
and a weight function $w : A \rightarrow \R $,
a feasible \emph{carpool matching} is a triple 
$(P, D, M)$, where $P$ (passengers) and $D$ (drivers) form a partition of $V$, 
and $M$ is a subset of $A \cap (P \times D)$,
under the constraints that for every vertex $d \in D$, 
$\din(d) \leq c(d)$, 
and for every vertex $p \in P$, $\dout(p) \leq 1$.
In the \textsc{\CARPOOL{}} problem we seek for a matching $(P, D, M)$ that maximizes the
total weight of $M$.

The problem arises when designing an online carpool service, 
such as Zimride~\cite{zimride}, 
that tries to connect between passengers and drivers based on (arbitrary) similarity function.
The problem is known to be NP-hard, 
even for uniform weights and without capacity constraints.

We present a $3$-approximation algorithm for the problem
and $2$-approx\-imation algorithm for the unweighted variant of the problem.
\end{abstract}

\section{Introduction}
Carpooling, is the sharing of car journeys so that more than one person travels
in a car.
Knapen et al.~\cite{knapen2013estimating} describe an automatic service
to match commuting trips.
Users of the service register their personal profile and a set of periodically
recurring trips, 
and the service advises registered candidates on how to combine their commuting
trips by carpooling.
The service acts in two phases. 

In the first phase, the service estimates the probability that a person $a$
traveling in person's $b$ car will be satisfied by the trip.
This is done based on personal information and feedback from users on past
rides.
The second phase is about finding a carpool matching
that maximizes the global (total expected) satisfaction.

The second phase can be modeled in terms of graph theory.
Given a directed graph $G = (V, A)$.
Each vertex $v \in V$ corresponds to a user of the service and an arc
$(u, v)$ exists if the user corresponding to vertex $u$ is willing to
commute with the user corresponding to vertex $v$.
A capacity function $ c: V \rightarrow \N $ is defined
according to the number of passengers each user can drive if she was
selected as a driver.
A weight function $w : A \rightarrow \R $ defines the amount of
satisfaction $w(u, v)$,
that user $u$ gains when riding with user $v$.

A feasible \emph{carpool matching} (matching) is a triple 
$(P, D, M)$, where $P$ and $D$ form a partition of $V$, 
and $M$ is a subset of $A \cap (P \times D)$,
under the constraints that for every driver $d \in D$, 
$\din(d) \leq c(d)$, 
and for every passenger $p \in P$, ${\dout(p) \leq 1}$.
In the \textsc{\CARPOOL{}} problem we seek for a matching $(P, D, M)$ that maximizes the
total weight of $M$.
In other words, the \textsc{\CARPOOL{}} problem is about finding a set of 
(directed toward the center) vertex disjoint stars 
that maximizes the total weights on the arcs.
Figure~\ref{fig:carpool} is an example of the \textsc{\CARPOOL{}} problem.

\begin{figure}
\centering
\newcommand{\arc}[3]{
	\draw (#1) -- (#2) node[label above] {#3};
}
\subfloat[]{
\label{subfloat:input}
\begin{tikzpicture}[every node/.style={default node}, ->, very thick]

\node(1) at(-1,3) {2};
\node(2) at(.5,.5) {3};
\node(3) at(-1,0) {1};
\node(4) at(-1,1) {3};
\node(5) at(-2,2) {0};
\node(6) at(-2,-1) {4};
\node(7) at(-3,0) {3};
\node(8) at(.5,-1) {2};
\node(9) at(0,2) {1};
\node(10) at(2,1) {4};

\arc{1}{4}{3}
\arc{2}{4}{4}
\arc{3}{4}{5}

\arc{5}{7}{2}
\arc{6}{7}{4}

\arc{8}{10}{6}
\arc{9}{10}{2}

\arc{1}{5}{4}
\arc{4}{5}{4}

\arc{2}{3}{2}
\arc{6}{3}{2}
\arc{7}{3}{2}

\arc{8}{2}{1}
\arc{10}{2}{1}

\arc{3}{8}{3}
\arc{8}{3}{3}

\arc{1}{9}{1}
\arc{4}{9}{1}

\end{tikzpicture}}
\hfill\subfloat[]{
\label{subfloat:output}
\begin{tikzpicture}[every node/.style={default node}, ->, very thick]

\begin{scope}[every node/.style={default node, draw=blue}]
\node(5) at(-2,2) {0};
\node(6) at(-2,-1) {4};

\node(2) at(.5,.5) {3};
\node(3) at(-1,0) {1};
\node(1) at(-1,3) {2};

\node(8) at(.5,-1) {2};
\node(9) at(0,2) {1};
\end{scope}

\begin{scope}[every node/.style={default node, draw=red, dashed}]
\node(7) at(-3,0) {3};

\node(4) at(-1,1) {3};

\node(10) at(2,1) {4};
\end{scope}

\arc{1}{4}{3}
\arc{2}{4}{4}
\arc{3}{4}{5}

\arc{5}{7}{2}
\arc{6}{7}{4}

\arc{8}{10}{6}
\arc{9}{10}{2}

\end{tikzpicture}}
\caption[]{
\label{fig:carpool}
A carpool matching example: 
\subref{subfloat:input} 
a directed graph with capacities on the vertices and weights on the arcs. 
\subref{subfloat:output}
a feasible matching with total weight of 26.
$P$ is the set of blue vertices, and $D$ is the set of red, dashed vertices. 
}
\end{figure}  

Hartman et al.~\cite{hartman2013optimal} proved that the
\emph{\CARPOOL{}} problem considered in this paper is NP-hard,
and that the problem remains NP-hard even for a binary weight function when
the capacity function $c(v) \leq 2$ for every vertex in $V$.
It is also worth mentioning, that in the undirected, uncapacitated, unweighted
variant of the problem, the set of drivers in an optimal solution
form a minimum dominating set.
When the set of drivers is known in advanced, however, the problem becomes
tractable and can be solved using a reduction to a flow network problem.

Agatz et al.~\cite{agatz2012optimization} outlined the optimization challenges
that arise when developing technology to support ride-sharing and survey the
related operations research models in the academic literature.  
Hartman et al.~\cite{hartman2014theory} designed several heuristic algorithms
for the \CARPOOL{} problem and compared 
their performance on real data.
Other heuristic algorithms were developed as well~\cite{knapen2014exploiting}.
Arkin et al.~\cite{arkin2004approximations}, considered other variants of
capacitated star packing where a capacity vector is given as part of the input and 
capacities need to be assigned to vertices.  

Nguyen et al.~\cite{nguyen2008approximating} considered the \textsc{spanning star forest} problem
(the undirected, uncapacitated, unweighted variant of the problem).
They proved the following results:
\begin{enumerate*}
\item
there is a polynomial-time approximation scheme for planner graphs;
\item 
there is a polynomial-time $\frac{3}{5}$-approximation algorithm for graphs;
\item 
there is a polynomial-time $\frac{1}{2}$-approximation algorithm for weighted graphs.
\end{enumerate*}
They also showed how to apply the spanning star forest model to aligning multiple
genomic sequences over a tandem duplication region.
Chen et al.~\cite{chen2007improved} improved the approximation ratio to 0.71,
and also showed that the problem can not be approximated to within a factor of
$\frac{31}{32} + \epsilon$ for any $\epsilon > 0$ under the assumption 
that $\text{P} \neq \text{NP}$.
It is not clear, however, if any of the technique used to address the
\textsc{spanning star forest} problem can be generalized to approximate the
directed capacitated variant.

In section~\ref{sec:fixed} we present an exact, efficient algorithm for the
problem when the set of drivers and passengers is given in advanced.
In section~\ref{sec:uwcm} we present a 2-approximation local search algorithm
for the unweighted variant of the problem.
Finally in section~\ref{sec:cm} we
give a 3-approximation algorithm for the problem.

\section{Maximum Weight Flow}
\label{sec:carpool:preliminary}
A flow network is a tuple $N = (G = (V, A), s, t, c)$, 
Where $G$ is a directed graph, 
$s \in V$ is a source vertex, 
$t \in V$ is a target vertex, 
and $c : A \rightarrow \R$ is a capacity function. 
A flow $f : A \rightarrow \R$ is a function that has the following properties:
\begin{itemize}
\item
$f(e) \leq c(e), \quad \forall e \in A$

\item
$\sum_{(u, v) \in A} f(u, v) = \sum_{(v, w) \in A} f(v, w), \quad \forall v \in V \setminus \{s, t\}$
\end{itemize}

Given a flow function $f$, 
and a weight function $w: A \rightarrow \R$, 
the \emph{flow weight} is defined to be:
$\sum_{e \in A}{w(e)f(e)}$.
A flow with a maximum weight (\emph{maximum weight flow}) can be efficiently found by adding 
the arc $(t, s)$, with $c(t,s) = \infty$, and $w(t,s) = 0$ and reducing the problem
(by switching the sign of the weights) 
to the minimum cost circulation problem~\cite{tardos1985strongly}.
When the capacity function $c$ is integral, 
a maximum weight integral flow can be efficiently found.

\section{\FIXEDCARPOOL{}}
\label{sec:fixed}
In the \textsc{\FIXEDCARPOOL{}} problem, $P$ and $D$ are given, 
and the goal is to find $M$ that maximizes the total weight. 
This variant of the problem can be solved efficiently
\footnote{A solution to this variant of the problem was already proposed in~\cite{hartman2014theory}.
For the sake of completeness, however, we describe a detailed solution for this variant. 
More importantly, 
the described solution helps us develop the intuition and understand the basic idea behind the
approximation algorithm described in Section~\ref{sec:cm}.   
},
by reducing it to a maximum weight flow (flow) problem as
follow:
Let $(G = (V, A), c, w)$ be a \CARPOOL{} instance,
let $(P, D)$ be a partition of $V$,
let  $N = (G' = (V', A'), s, t, c')$ be a flow network, 
and let $w' : A \rightarrow \N$ be a weight function, where 

\begin{align*}
V'			& = P \cup D \cup \{s, t\}										\\
A'			& = A_{sp} \cup A_{pd} \cup A_{dt}								\\
A_{sp}		& =	\{(s, p) : p \in P \}										\\
A_{pd}		& =	A \cap (P \times D)											\\
A_{dt}		& =	\{(d, t) : d \in D \}										\\
c'(u, v)	& = 
				\begin{cases}
				c(u) & \text{if } (u, v) \in A_{dt} 						\\
				1 & \text{otherwise}
				\end{cases}
																			\\
w'(e)			& = 
				\begin{cases}
				w(e) & \text{if } e \in A_{pd} 								\\
				0 & \text{otherwise}	
				\end{cases}
\end{align*}
The flow network is described in Figure~\ref{fig:flow}.
\begin{figure}
\centering
\begin{tikzpicture}[every node/.style={default node}]

\node(s) at (0,0) {s};

\node[draw=none]() 		at (3,2.2) {$P$};
\node(p0) 				at (3,1.6) {$p_0$};
\node[draw=none](pdots1)at (3,.8) {$\vdots$};
\node(pi) 				at (3,0) {$p_i$};
\node[draw=none](pdots2)at (3,-.8) {$\vdots$};
\node(pl) 				at (3,-1.6) {$p_l$};

\node[draw=none]() 		at (7,2.2) {$D$};
\node(d0) 				at (7,1.6) {$d_0$};
\node[draw=none](pdots)	at (7,.8) {$\vdots$};
\node(dj) 				at (7,0) {$d_j$};
\node[draw=none](pdots)	at (7,-.8) {$\vdots$};
\node(dm) 				at (7,-1.6) {$d_m$};

\node(t) at (10,0) {t};

\draw[->] (s) -- (pi)
node[label above] {$w' = 0$}
node[label below] {$c' = 1$};

\draw[->] (pi) -- (dj)
node[label above, above=-5mm] {$w' = w(p_i, d_j)$}
node[label below] {$c' = 1$};

\draw[->] (dj) -- (t)
node[label above] {$w' = 0$}
node[label below, below=-3mm] {$c' = c(d_j)$};

\newcommand{\edots}[2]{
\path (#1) -- (#2)
node[label, pos=0.1, anchor=center] {$\cdots$}
node[label, pos=0.9, anchor=center] {$\cdots$};
}

\edots{s}{p0}
\edots{s}{pl}

\edots{p0}{d0}
\edots{p0}{dj}
\edots{pi}{d0}
\edots{pi}{dm}
\edots{pl}{dj}
\edots{pl}{dm}

\edots{d0}{t}
\edots{dm}{t}

\end{tikzpicture}
\caption{
\label{fig:flow}
Illustration of a flow network corresponding to a \FIXEDCARPOOL{} instance.}
\end{figure}

\begin{observation}
For every integral flow $f$ in $N$, there is a carpool matching $M$ on $G$ with
the same weight.
\end{observation}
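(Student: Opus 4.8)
The plan is to read the matching directly off the flow by retaining exactly those arcs of $A_{pd}$ that carry one unit of flow. Concretely, I would set
\[
M := \{ (p, d) \in A_{pd} : f(p, d) = 1 \},
\]
and then verify that $(P, D, M)$ is a feasible carpool matching of the required weight. Since every arc in $A_{sp} \cup A_{pd}$ has capacity $1$ and $f$ is integral (and nonnegative), the restriction of $f$ to these arcs takes values in $\{0, 1\}$; this makes the definition well posed and gives $M \subseteq A \cap (P \times D)$ immediately, since $A_{pd} = A \cap (P \times D)$ by construction.

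The key step is to derive the two degree constraints from flow conservation together with the capacities built into $N$. For a passenger $p \in P$, the only arc entering $p$ in $G'$ is $(s, p)$, whose capacity is $1$, so the total flow into $p$ is at most $1$; by conservation the total flow leaving $p$—which runs along the arcs of $A_{pd}$ incident to $p$—is also at most $1$, which is exactly $\dout(p) = \sum_{(p,d) \in A_{pd}} f(p,d) \leq 1$. Symmetrically, for a driver $d \in D$ the only arc leaving $d$ is $(d, t)$, with capacity $c(d)$, so the total flow out of $d$ is at most $c(d)$; by conservation the flow entering $d$ along $A_{pd}$ is at most $c(d)$, giving $\din(d) = \sum_{(p,d) \in A_{pd}} f(p,d) \leq c(d)$. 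This establishes feasibility of $(P,D,M)$.

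It remains to match the weights, which is immediate from the choice of $w'$. Since $w'$ vanishes on $A_{sp} \cup A_{dt}$ and equals $w$ on $A_{pd}$, and since $f(e) \in \{0,1\}$ on $A_{pd}$, the flow weight satisfies
\[
\sum_{e \in A'} w'(e) f(e) = \sum_{e \in A_{pd}} w(e) f(e) = \sum_{e \in M} w(e),
\]
so $M$ has the same weight as $f$. I do not expect a genuine obstacle here; the only point requiring care is the implicit nonnegativity of $f$, which is what pins the $A_{pd}$ values to $\{0,1\}$ and lets the conservation argument translate the capacities on $(s,p)$ and $(d,t)$ into the out-degree and in-degree bounds.
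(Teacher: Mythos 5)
Your proposal is correct and takes exactly the paper's approach: the paper defines the same set $M^f = \{(p,d) \in A_{pd} : f(p,d) = 1\}$ and simply asserts that ``one can verify'' it is a matching of equal weight. You have merely spelled out that verification (deriving the degree bounds from conservation plus the unit and $c(d)$ capacities, and noting the role of integrality and nonnegativity), which is a faithful, more detailed rendering of the same argument.
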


\begin{proof}
Consider the carpool matching $(P, D, M^f)$, where 
$$ M^f = \{(p, d) \in A_{pd} : f(p, d) = 1\} $$
one can verify that this is indeed a matching with the same weight as $f$.
\end{proof}

\begin{observation}
For every carpool matching $(P, D, M)$ on $G$, there exists a flow $f$ on $N$
with the same weight.
\end{observation}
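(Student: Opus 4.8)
The plan is to run the construction of the previous observation in reverse: starting from a matching $(P,D,M)$ I build an integral flow $f$ on $N$ that saturates exactly the arcs dictated by $M$, together with the source and sink arcs needed to turn this into a genuine flow. Concretely, I would set $f(p,d)=1$ for every $(p,d)\in M$ and $f(p,d)=0$ for every $(p,d)\in A_{pd}\setminus M$; on the source arcs I would put $f(s,p)=\dout(p)$ for each $p\in P$; and on the sink arcs $f(d,t)=\din(d)$ for each $d\in D$. Since $M\subseteq A\cap(P\times D)$, these three families of arcs are precisely $A_{sp}$, $A_{pd}$, and $A_{dt}$, so $f$ is defined on all of $A'$.

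Next I would check feasibility, which splits into the capacity bounds and flow conservation. For the capacities: on a source arc $f(s,p)=\dout(p)\le 1=c'(s,p)$ because $p$ is a passenger; on a middle arc $f(p,d)\in\{0,1\}$, so $f(p,d)\le 1=c'(p,d)$; and on a sink arc $f(d,t)=\din(d)\le c(d)=c'(d,t)$ because $d$ is a driver. Thus the degree constraints of the matching are exactly what make the capacity constraints hold. For conservation I would verify the balance equation at each passenger and each driver: at a passenger $p$ the inflow $f(s,p)=\dout(p)$ equals the outflow $\sum_{d}f(p,d)$, which counts the arcs of $M$ leaving $p$, i.e.\ $\dout(p)$; symmetrically, at a driver $d$ the inflow $\sum_{p}f(p,d)=\din(d)$ equals the outflow $f(d,t)=\din(d)$. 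Hence $f$ is a flow in $N$.

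Finally I would compare weights. Only the arcs of $A_{pd}$ carry nonzero weight under $w'$, so the flow weight is $\sum_{(p,d)\in A_{pd}} w(p,d)\,f(p,d)=\sum_{(p,d)\in M} w(p,d)$, which is exactly the weight of $M$. This yields the claimed equality.

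I do not expect a genuine obstacle here, since the statement is the converse of the preceding observation and the whole argument is bookkeeping. The only point requiring care is choosing the flow values on the source and sink arcs so that conservation holds automatically: defining them as $\dout(p)$ and $\din(d)$ is precisely what makes the interior balance equations immediate, and one then double-checks that these choices respect the capacities, which is where the passenger constraint $\dout(p)\le 1$ and the driver constraint $\din(d)\le c(d)$ of a carpool matching enter.
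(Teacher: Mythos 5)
Your construction is exactly the one the paper uses: set $f(s,p)=\dout(p)$, $f(p,d)=1$ iff $(p,d)\in M$, and $f(d,t)=\din(d)$, then check feasibility and weight equality. The paper leaves the verification as "easy to verify," while you spell out the capacity and conservation checks, but the argument is the same.
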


\begin{proof}
Consider the flow function
\begin{align*}
f(s, p_i)		& = \dout(p_i)		 				\\
f(p_i, d_j)		& = 
				\begin{cases}
				1 & \text{if } (p_i, d_j) \in M		\\
				0 & \text{otherwise}
				\end{cases}						\\
f(d_j, t) 	& = \din(d_j) 
\end{align*}

It is easy to verify, that $f$ is indeed a flow function.
Also, observe, that by construction,
the weight of $f$ equals to the weight of the matching.
\end{proof}

As we mentioned, 
the maximum weight flow problem can be solved efficiently, 
and so is the \FIXEDCARPOOL{} problem.
It is worth mentioning, that it is possible that in a maximum weight flow, 
some of the arcs will have no flow at all, 
that is, it is possible that in a \FIXEDCARPOOL{}
some of the passengers and drivers will be unmatched.  

\section{\UWCARPOOL{}}
\label{sec:uwcm}
In this section we present a local search algorithm for the unweighted
variant of the problem.
We show that the approximation ratio of this algorithm is $2$ and give an example
to show that our analysis is tight.

Given a directed graph $G = (V, A)$, 
and a capacity function ${c : V \rightarrow \N}$, 
In the \textsc{\UWCARPOOL{}} problem, 
we seek for a matching that maximizes the size of $M$.

We now present a simple local search algorithm for the problem. 
The algorithm maintains a feasible matching through its execution.
In every iteration of the algorithm, the size of $M$ increases.
The algorithm terminates, when no further improvement can be made. 

Recall that the \FIXEDCARPOOL{} can be solved efficiently.
Let $M = \text{opt}_{fixed}(P, D)$ be an optimal solution of the
\FIXEDCARPOOL{} problem.
For a given matching $M$, define the following sets:
\begin{itemize}
\item $P^M = \{v : \dout(v) = 1\}$
\item $D^M = \{v : \din(v) > 0\}$
\item $D^M_c = \{v : \din(v) = c(v)\}$
\item $F^M = \{v : \din(v) = \dout(v) = 0\}$ 
\end{itemize}
We refer to the vertices in these sets as, \emph{passenger}, 
\emph{driver}, \emph{saturated driver}, and \emph{free vertex} respectively.
The local search algorithm, in every iteration, 
tries to improve the current matching, 
by switching a passenger or a free vertex into a driver 
and compute an optimal fixed matching.
The local search algorithm is described in
Algorithm~\ref{alg:local}.

\begin{algorithm}
\SetKw{True}{true}
\SetKw{False}{false}
\KwIn{$G = (V, A)$, $c : V \rightarrow \N$}
\KwOut{$M$}
$M \leftarrow \emptyset$					\\
\Repeat{done}{
\label{line:outerloop}
 	$done \leftarrow{}$ \True				\\
 	\For{$v \in (V \setminus D^M)$}{
 		$D \leftarrow D^M \cup \{v\}$ 		\\
 		$P \leftarrow V \setminus D$ 		\\
 		$M' = \text{opt}_{fixed}(P, D)$ 	\\
		\If{$|M'| > |M|$}{
			$M \leftarrow M'$				\\
			$done \leftarrow{}$ \False		\\
		}
 	}
}
\Return{$M$}

\caption{
\label{alg:local}
Local Search}
\end{algorithm}

First, observe that the outer loop on line~\ref{line:outerloop} of the local search algorithm
can be executed at most $n$ times, 
where $n$ is the total number of vertices, 
this is because the loop is executed only when there was an improvement, 
and this can happen at most $n$ times.
Also, observe that the body of this loop can be computed in polynomial time, 
and we can conclude that Algorithm~\ref{alg:local} runs in polynomial time.
    
We now prove that the local search algorithm achieves an approximation ratio of $2$.
Let $M$ be a matching found by the local search algorithm, 
and let $M^*$ be an arbitrary but fixed optimal matching.
Observe that every arc in $M^*$ has at least one end point in $M$, formally: 
\begin{observation}
If $(u,v) \in M^*$, then $\{u,v\} \cap (P^M \cup D^M) \neq \emptyset$
\end{observation}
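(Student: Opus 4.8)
The plan is to argue by contradiction, exploiting the fact that the matching $M$ returned by Algorithm~\ref{alg:local} is locally optimal: at termination the flag $done$ survives an entire pass of the inner loop, so for the final $M$ and for every $v \in V \setminus D^M$ we have $|\mathrm{opt}_{fixed}(V \setminus (D^M \cup \{v\}),\, D^M \cup \{v\})| \le |M|$. Assume, toward a contradiction, that some arc $(u,v) \in M^*$ has both endpoints outside $P^M \cup D^M$, i.e. $u, v \in F^M$ are free vertices. I would then exhibit a strictly larger fixed matching obtained by promoting $v$ to a driver, contradicting this local-optimality condition.

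Concretely, since $v$ is free it satisfies $\din(v) = \dout(v) = 0$, hence $v \notin D^M$, so $v$ is indeed one of the vertices examined by the inner loop; set $D = D^M \cup \{v\}$ and $P = V \setminus D$ as the algorithm does. The first routine step is to check that $M$ itself remains feasible for this partition: every arc of $M$ points into a vertex of $D^M \subseteq D$, and every tail of an arc of $M$ is a passenger of $M$ and hence distinct from the free vertex $v$, so all tails lie in $P$. The second step is to verify that $M \cup \{(u,v)\}$ is a \emph{strictly larger} feasible matching for $(P,D)$. Indeed $v \in D$ by construction, while $u$ is free and therefore not a driver and $u \ne v$, so $u \in P$; the degree constraints survive because $\dout(u)=0$ becomes $1 \le 1$, and because $v$ is a driver in $M^*$ we have $c(v) \ge 1$, so $\din(v)=0$ becomes $1 \le c(v)$. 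Hence $|\mathrm{opt}_{fixed}(P,D)| \ge |M| + 1 > |M|$, the desired contradiction.

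The main obstacle is not the counting but making the feasibility bookkeeping precise: one must be careful that $(u,v) \notin M$ (which follows from $u$ being free), that $u \ne v$, and above all that $c(v) \ge 1$ --- this is exactly where the hypothesis $(u,v) \in M^*$ enters, since an arc of the optimal matching certifies that $v$ can legally serve as a driver. Once these small checks are in place, invoking the termination guarantee of the local search closes the argument, and I would keep the write-up short by treating the feasibility of $M$ and of $M \cup \{(u,v)\}$ as immediate consequences of $u, v \in F^M$.
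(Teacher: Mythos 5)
Your proof is correct and takes essentially the same approach as the paper's: the paper's one-line argument simply notes that Algorithm~\ref{alg:local} could improve $M$ by adding the arc $(u,v)$, which is exactly the local-optimality contradiction you spell out in detail. The extra bookkeeping you supply (feasibility of $M$ under the new partition, $u \ne v$, and $c(v) \ge 1$ certified by $(u,v) \in M^*$) consists of details the paper leaves implicit.
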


\begin{proof}
If this is not the case, Algorithm~\ref{alg:local} can improve $M$
by adding the arc $(u,v)$.  
\end{proof}

Now, with respect to $M$, the optimal solution can not match two free vertices to the same
passenger, formally:
\begin{observation}
\label{observation:one-free}
If $(p,d) \in M$, $f_1, f_2 \in F^M$, and $(f_1, p), (f_2, p) \in M^*$, then $f_1 = f_2$.
\end{observation}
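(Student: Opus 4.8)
The plan is to argue by contradiction, exploiting the fact that $M$ is a local optimum: since Algorithm~\ref{alg:local} terminated with $M$, no move of the form ``promote one vertex to a driver and recompute $\text{opt}_{fixed}$'' can increase $|M|$. So I would assume $f_1 \neq f_2$ and derive an improving move. First I would record the consequence of the assumption: because $(f_1,p)$ and $(f_2,p)$ are two \emph{distinct} arcs of $M^*$ entering $p$, the in-degree of $p$ with respect to $M^*$ is at least $2$, and feasibility of $M^*$ forces $c(p) \geq 2$. This capacity bound is precisely what will later let $p$ absorb both free vertices.

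Next I would identify the local move to examine, namely promoting $p$ itself to a driver. Before that I must check $p$ is an eligible candidate, i.e. $p \in V \setminus D^M$: since $(p,d) \in M$, the vertex $p$ is a passenger of $M$, so no arc of $M$ enters it, giving $\din(p) = 0$ and hence $p \notin D^M$. Thus some iteration of the inner loop sets $D = D^M \cup \{p\}$ and $P = V \setminus D$, and here $f_1, f_2 \in P$, since free vertices lie outside $D^M$ and differ from the matched vertex $p$.

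I would then exhibit a feasible fixed matching for this partition that beats $M$. Take
\[
M' = \bigl( M \setminus \{(p,d)\} \bigr) \cup \{(f_1,p),(f_2,p)\}.
\]
Deleting $(p,d)$ is forced because $p$ is now a driver; the two added arcs lie in $A$ (they come from $M^*$) and respect every constraint: $f_1$ and $f_2$ each acquire out-degree $1$, the vertex $p$ receives in-degree $2 \leq c(p)$, and every other vertex inherits its constraint unchanged from $M$. Since the removed arc is distinct from the two added ones (it leaves $p$, whereas the others enter $p$) and $f_1 \neq f_2$, we get $|M'| = |M| + 1$. Hence $\text{opt}_{fixed}(P,D) \geq |M'| > |M|$, so the algorithm would have improved $M$ in this iteration, contradicting that $M$ is its output; therefore $f_1 = f_2$.

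The main obstacle is not any single computation but getting the bookkeeping of the local move exactly right. Three points must be checked in concert: that $p$ is a legal candidate for promotion ($p \notin D^M$), that both free vertices genuinely become passengers in the new partition, and—most delicately—that the net change in size is truly $+1$. The last point is where the argument lives or dies: it hinges simultaneously on the bound $c(p) \geq 2$ extracted from $f_1 \neq f_2$ (without which $p$ could not take both arcs) and on the distinctness of the single removed arc from the two added arcs.
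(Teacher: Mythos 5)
Your proof is correct and follows essentially the same route as the paper's: assuming $f_1 \neq f_2$, the matching $M$ could be improved by removing $(p,d)$ and adding $(f_1,p),(f_2,p)$, contradicting local optimality. The paper states this exchange in one line, while you carefully verify the details it leaves implicit (that $p$ is an eligible promotion candidate, that $c(p)\geq 2$ follows from feasibility of $M^*$, and that the move corresponds to a legal iteration of Algorithm~\ref{alg:local} via $\text{opt}_{fixed}(P,D) \geq |M|+1$), which is a faithful elaboration rather than a different argument.
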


\begin{proof}
If this is not the case, Algorithm~\ref{alg:local} can improve $M$ 
by removing the arc $(p,d)$ and adding the arcs $(f_1, p), (f_2, p)$.
\end{proof}

Finally, with respect to $M$, the optimal solution can not match a free vertex to a driver
that is not saturated, formally: 
\begin{observation}
\label{observation:saturated}
If $(f,d) \in M^*$, $f \in F^M$, and $d \in D^M$, then $d \in D^M_c$.
\end{observation}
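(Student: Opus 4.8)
The plan is to argue by contradiction, following the same local-search paradigm used in the two preceding observations: if the stated configuration were present together with $d \notin D^M_c$, then $M$ would admit a strictly larger feasible matching, contradicting the fact that $M$ is the output of Algorithm~\ref{alg:local}.

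First I would assume, towards a contradiction, that $d \notin D^M_c$, i.e.\ that $d$ is an unsaturated driver, so that $\din(d) < c(d)$. Since $f \in F^M$ is a free vertex, we have $\dout(f) = 0$ and $\din(f) = 0$, so $f$ is currently neither a passenger nor a driver of $M$. I would then consider the candidate matching $M \cup \{(f,d)\}$ and verify that it is feasible: the arc $(f,d)$ exists because $(f,d) \in M^* \subseteq A$; adding it raises $\dout(f)$ from $0$ to $1$, which respects the passenger constraint; and it raises $\din(d)$ by one, which is still at most $c(d)$ precisely because $d$ was unsaturated. Hence $M \cup \{(f,d)\}$ is a feasible carpool matching of size $|M|+1$.

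The second step is to realize this augmentation as a move in the local-search neighborhood. The natural partition to use is $D = D^M$ and $P = V \setminus D^M$: every arc of $M$ leaves $P^M \subseteq P$ and enters $D^M = D$, while $f \in F^M \subseteq P$ and $d \in D^M = D$, so $M \cup \{(f,d)\}$ is a valid fixed matching for this partition. Consequently $\text{opt}_{fixed}(P,D) \ge |M|+1 > |M|$, so $M$ cannot be optimal for its own driver set $D^M$, which is the contradiction I want.

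I expect the main obstacle to be exactly the bookkeeping that connects $\text{opt}_{fixed}(V \setminus D^M, D^M)$ to a step that Algorithm~\ref{alg:local} actually performs, since the inner loop only ever promotes a fresh vertex, forming $D^M \cup \{v\}$, and never recomputes with the driver set equal to $D^M$ itself. The delicate point is to guarantee that in the recomputation $f$ lands on the passenger side rather than being forced onto the driver side; choosing the promoted vertex so that $f$ remains a passenger, and so that no arc of $M$ is destroyed by the promotion, is what makes the improving matching $M \cup \{(f,d)\}$ visible to $\text{opt}_{fixed}$. Handling this cleanly, in particular the corner case where $f$ is the only free vertex, is where the real care is needed, after which the contradiction with the termination of the algorithm closes the proof.
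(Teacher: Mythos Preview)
Your approach is exactly the paper's: argue by contradiction that if $d$ were unsaturated then $M\cup\{(f,d)\}$ is a strictly larger feasible matching, so Algorithm~\ref{alg:local} would not have terminated. The paper's own proof is a single sentence to that effect and does not engage with the neighborhood-bookkeeping issue you raise in your final paragraph, so you are already being more careful than the original.
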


\begin{proof}
If this is not the case, once again, Algorithm~\ref{alg:local} can improve $M$
by adding the arc $(f,d)$.
\end{proof}

To show that Algorithm~\ref{alg:local} is 2-approximation, 
consider the charging scheme that is illustrated in Figure~\ref{fig:charging}:
Load every arc $(p,d) \in M$ with 2 coins, 
place one coin on $p$ and one coin on $d$.
Observe that every vertex $p \in P^M$ is loaded with one coin, 
and every vertex $d \in D^M$ is loaded with $\din(d)$ coins.   
Now, pay one coin for every $(u,v) \in M^*$, charge $u$ if $u \in P^M \cup D^M$, 
otherwise ($v \in P^M \cup D^M$) charge $v$.
Clearly, every arc in $M^*$ is paid.
We claim that no vertex is overcharged.

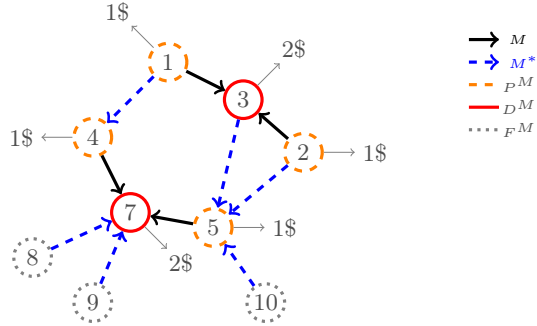
\begin{figure}
\centering
\begin{tikzpicture}[
very thick, 
->, 
every pin arc/.style={
	-, 
	very thin,
	decorate,
	decoration={snake},
	black!60,
}]

\tikzset{
	opt/.style={blue, dashed},
	d/.style={default node, draw=red, -},
	p/.style={default node, draw=orange, dashed,-},
	f/.style={default node, draw=gray, dotted,-},
}

\draw[] (3,1.3) -- (3.4,1.3) node[right]{\tiny $M$};
\draw[opt] (3,1) -- (3.4,1) node[right]{\tiny $M^*$};
\draw[p] (3,.7) -- (3.4,.7) node[right]{\tiny $P^M$};
\draw[d] (3,.4) -- (3.4,.4) node[right]{\tiny $D^M$};
\draw[f] (3,.1) -- (3.4,.1) node[right]{\tiny $F^M$};

\begin{scope}[every node/.style={d}]
\node(3)[pin=above right:{\small 2\$}] at(0,.5)			{3};
\node(7)[pin=below right:{\small 2\$}] at(-1.5, -1)		{7};
\end{scope}

\begin{scope}[every node/.style={p}]
\node(1)[pin=above left:{\small 1\$}] 	at(-1,1)		{1};
\node(2)[pin=right:{\small 1\$}] 		at(.8,-.2)		{2};
\node(4)[pin=left:{\small 1\$}] 		at(-2,0)		{4};
\node(5)[pin=right:{\small 1\$}] 		at(-.4,-1.2)	{5};
\end{scope}

\begin{scope}[every node/.style={f}]
\node(8)								at(-2.8,-1.6)	{8};
\node(9)								at(-2,-2.2)		{9};
\node(10)								at(.3,-2.2)		{10};
\end{scope}

\begin{scope}
\draw (1) to (3);
\draw (2) to (3);
\draw (4) to (7);
\draw (5) to (7);
\end{scope}

\begin{scope}[opt]
\draw (1) to (4);
\draw (8) to (7);
\draw (9) to (7);
\draw (10) to (5);
\draw (2) to (5);
\draw (3) to (5);
\end{scope}

\end{tikzpicture}
\caption[]{
\label{fig:charging}
Charging Scheme:																	\\
1. vertices $1,2,4,5$ are loaded with 1\$ each and vertices $3,7$ with 2\$ each.	\\
2. vertex 1 pays for the arc $(1,4)$.												\\
3. vertex 5 pays for the arc $(10,5)$.												\\
4. vertex 7 is saturated. It pays for arcs $(8,7)$ and (9,7). 
}
\end{figure}

\begin{observation}
\label{observation:p-not-charged}
If $u \in P^M$, then $u$ is not overcharged.
\end{observation}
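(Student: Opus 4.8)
The plan is to track, for a fixed vertex $u \in P^M$, exactly how many coins it carries and how many times it can be charged, and then show the latter never exceeds the former. First I would record the deposit: since $u \in P^M$ means $\dout(u) = 1$, the vertex $u$ is the passenger endpoint of precisely one arc of $M$, so by the charging scheme it carries exactly one coin. It therefore suffices to argue that $u$ is charged at most once.

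Next I would split according to the role $u$ plays in the optimal matching $M^*$. Because $M^* \subseteq A \cap (P^* \times D^*)$ for the partition $(P^*, D^*)$ induced by $M^*$, every arc of $M^*$ runs from a passenger to a driver; hence $u$ is either a passenger of $M^*$ (only outgoing arcs, in-degree $0$) or a driver of $M^*$ (only incoming arcs, out-degree $0$), and these two cases are disjoint. In the first case $u$ can only be charged as the tail of its at most one outgoing $M^*$-arc, so the charge is at most one. The main work is the second case: here $u$ is charged only for incoming arcs $(w,u) \in M^*$ whose tail $w$ lies in $F^M$, since any other tail $w \in P^M \cup D^M$ pays for itself under the charging rule, so I must bound the number of free vertices mapped to $u$.

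The key step — and the place where the local-search optimality is really used — is to invoke Observation~\ref{observation:one-free}. Since $u \in P^M$, there is a driver $d$ with $(u,d) \in M$, so $u$ plays the role of the matched passenger $p$ in that observation; it then tells us that at most one free vertex can be matched to $u$ in $M^*$. Consequently $u$ receives at most one charge in the driver case as well. Combining the two disjoint cases, $u$ is charged at most once against its single coin, so it is not overcharged. I expect the only subtlety to be the correct bookkeeping of the charging rule in the driver case — confirming that non-free predecessors are charged to themselves rather than to $u$ — after which Observation~\ref{observation:one-free} closes the argument immediately.
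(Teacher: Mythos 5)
Your proof is correct and follows essentially the same route as the paper's: the same case split on whether $u$ is a passenger or a driver with respect to $M^*$, with Observation~\ref{observation:one-free} supplying the key bound of at most one charging free vertex in the driver case. Your version simply spells out the bookkeeping (one coin deposited, non-free tails pay for themselves) that the paper leaves implicit.
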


\begin{proof}
If $u \in P^{M^*}$, then it is only charged once, otherwise, 
if $u \in D^{M^*}$, then it is only charged for arcs $(w, u)$ where $w \in F^M$,
and by Observation~\ref{observation:one-free}, there is at most one such arc. 
\end{proof}

\begin{observation}
\label{observation:d-not-charged}
If $u \in D^M$, then $u$ is not overcharged.
\end{observation}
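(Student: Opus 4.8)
The plan is to bound the total charge received by $u$ from above by its load and show the load always suffices. By the charging scheme, the load on $u \in D^M$ is exactly $\din(u)$ coins, one contributed by each arc of $M$ entering $u$. Since every vertex lies in exactly one of the classes $P^{M^*}$ or $D^{M^*}$, or is incident to no arc of $M^*$ at all, I would split into these three cases and verify in each that the charge does not exceed $\din(u)$.

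First, suppose $u \in P^{M^*}$, so $u$ is the tail of a single arc $(u,v) \in M^*$ and the head of none. Because $u \in D^M \subseteq P^M \cup D^M$, the charging rule bills $u$ for this one outgoing arc and for nothing else; hence $u$ receives at most one coin. Since $u \in D^M$ forces $\din(u) \geq 1$, vertex $u$ is not overcharged.

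The main case, and the only delicate one, is $u \in D^{M^*}$. Here $u$ is the tail of no arc of $M^*$, so it is billed only for incoming arcs $(w,u) \in M^*$ whose tail satisfies $w \notin P^M \cup D^M$, i.e.\ $w \in F^M$; an arc with $w \in P^M \cup D^M$ is charged to $w$ instead. If no such arc exists, $u$ is not charged and there is nothing to prove. Otherwise, Observation~\ref{observation:saturated} applies to each such arc and forces $u \in D^M_c$, so that $\din(u) = c(u)$. The number of arcs billed to $u$ is at most its in-degree in $M^*$, which is at most $c(u)$ by feasibility of $M^*$; thus $u$ is charged at most $c(u) = \din(u)$ times and is not overcharged.

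Finally, if $u$ is incident to no arc of $M^*$ it is never billed, and combining the three cases completes the argument. I expect the driver case to be the crux: the decisive point is that Observation~\ref{observation:saturated} converts the event ``$u$ is charged for a free vertex'' into ``$u$ is saturated'', which is exactly what raises $u$'s load from a possibly small $\din(u)$ to the full capacity $c(u)$ needed to absorb up to $c(u)$ such charges.
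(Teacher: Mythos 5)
Your proof is correct and follows essentially the same route as the paper's: case on whether $u \in P^{M^*}$ or $u \in D^{M^*}$, note that in the latter case $u$ is only billed for arcs from $F^M$, and invoke Observation~\ref{observation:saturated} to conclude $u$ is saturated and hence can absorb up to $c(u)$ charges. Your write-up is merely more explicit than the paper's (e.g., noting $\din(u) \geq 1$ in the passenger case and bounding the charges by $c(u)$ via feasibility of $M^*$), but the underlying argument is identical.
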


\begin{proof}
If $u \in P^{M^*}$, then it is only charged once, 
if $u \in D^{M^*}$, then it is only charged for arcs $(w, u)$ where $w \in F^M$,
if such arcs exists, then by observation~\ref{observation:saturated}, $u$ is saturated, 
and can not be overcharged.
\end{proof}

\begin{theorem}
Algorithm~\ref{alg:local} is 2-approximation
\end{theorem}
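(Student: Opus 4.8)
The plan is to convert the charging scheme into a counting argument that bounds $|M^*|$ in terms of $2|M|$. First I would record that the total number of coins placed on vertices is exactly $2|M|$: the scheme puts two coins on each of the $|M|$ arcs of $M$. Equivalently, each $p \in P^M$ carries one coin and each $d \in D^M$ carries $\din(d)$ coins, and since every $M$-arc has a distinct passenger as its tail and contributes one to the in-degree of its head, $\sum_{d \in D^M}\din(d) + |P^M| = |M| + |M| = 2|M|$.

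Next I would check that the payment rule is well defined and pays for every arc of $M^*$ exactly once. For an arc $(u,v) \in M^*$, the observation that every $M^*$-arc has an endpoint in $P^M \cup D^M$ guarantees that the rule ``charge $u$ if $u \in P^M \cup D^M$, otherwise charge $v$'' always selects a vertex lying in $P^M \cup D^M$; in particular, no free vertex is ever charged. Hence the total amount charged equals the number of arcs of $M^*$, namely $|M^*|$.

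The crux is then the no-overcharge property: the number of coins charged to any single vertex never exceeds the number placed on it. Since only vertices of $P^M \cup D^M$ are ever charged, it suffices to treat these two cases, which are precisely Observations~\ref{observation:p-not-charged} and~\ref{observation:d-not-charged}. For $u \in P^M$ (one coin placed) the former shows $u$ is charged at most once; for $u \in D^M$ ($\din(u)$ coins placed) the latter shows $u$ is charged at most $\din(u)$ times, using Observation~\ref{observation:saturated} to bound the number of free-vertex arcs entering $u$ by its capacity whenever $u$ acts as a driver in $M^*$. Summing this inequality over all vertices yields
\[
|M^*| \;=\; \sum_{v}\bigl(\text{coins charged to } v\bigr)\;\le\;\sum_{v}\bigl(\text{coins placed on } v\bigr)\;=\;2|M|,
\]
so $|M| \ge \tfrac{1}{2}|M^*|$, establishing the claimed ratio.

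I do not expect a genuine obstacle at this stage, since Observations~\ref{observation:one-free}--\ref{observation:d-not-charged} already carry the real combinatorial content. The only point demanding care is verifying that the charging is a bona fide function: each $M^*$-arc charges exactly one vertex, that vertex always lies in $P^M \cup D^M$, and free vertices are never charged, so that the double-counting inequality is oriented correctly (coins charged $\le$ coins placed) rather than reversed.
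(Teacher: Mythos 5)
Your proposal is correct and follows exactly the paper's argument: the paper's one-line proof is a summary of the same charging scheme (two coins per arc of $M$, one coin paid per arc of $M^*$), with the well-definedness and no-overcharge steps delegated to the same observations you invoke. You have simply made the double-counting inequality $|M^*| \le 2|M|$ explicit, which the paper leaves implicit.
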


\begin{proof}
We use a charging scheme where we manage to pay 1 coin for each arc in $M^*$
by using at most $2|M|$ coins.
\end{proof}

To conclude this section, we show that our analysis is tight.
Consider the example given in Figure~\ref{fig:localtight}.
Assume, in this example, that there are no capacity constraints,
if the local search algorithm starts by choosing vertex $3$ to be a driver, 
then the returned matching is the single arc $(2,3)$.
At this point, no further improvement can be done.
The optimal matching, on the other hand, is $\{(1, 2), (3, 2)\}$. 
The path in the example can be duplicated to form an arbitrary large graph (forest).

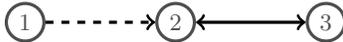
\begin{figure} 
\centering
\begin{tikzpicture}[every node/.style={default node}]

\node(a) at (0,0) {$1$};
\node(b) at (2,0) {$2$};
\node(c) at (4,0) {$3$};

\draw[->, dashed, very thick] (a) -- (b);
\draw[<->, very thick](b) -- (c);

\end{tikzpicture}
\caption{
\label{fig:localtight}
Local Search - Worst Case Example
}
\end{figure}

\section{\CARPOOL{}}
\label{sec:cm}
	\subsection{Super Matching}
A super-matching is a relaxed variant of the \CARPOOL{} problem 
where every node can act both as a driver and as a passenger. 
Formally, given a directed graph $G = (V, A)$, 
a capacity function $ c: V \rightarrow \N $,
and a weight function $w : A \rightarrow \R$,
a \emph{super-matching} is a set $M \subseteq A$, 
under the constraint that $\forall v \in V$,
$\din(v) \leq c(v)$, and $\dout(v) \leq 1$. 
Clearly, the following observation holds:
\begin{observation}
\label{ob:matching is super}
Every matching $(P, V, M)$ is a super-matching $M$
\end{observation}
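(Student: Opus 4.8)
The plan is to verify directly that the two defining inequalities of a super-matching hold at \emph{every} vertex, exploiting the fact that in a carpool matching the arc set $M$ contains only arcs directed from passengers to drivers. First I would recall what feasibility of the triple $(P, D, M)$ supplies: we have $M \subseteq A \cap (P \times D)$, together with $\din(d) \leq c(d)$ for each $d \in D$ and $\dout(p) \leq 1$ for each $p \in P$. The target is the super-matching requirement, namely that $\din(v) \leq c(v)$ \emph{and} $\dout(v) \leq 1$ hold simultaneously at every $v \in V$, so the work is to supply, at each vertex, the one inequality that is not already handed to us by the matching constraints.

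The key observation is that since every arc of $M$ lies in $P \times D$, no arc leaves a driver and no arc enters a passenger. Concretely, for any driver $d \in D$ the bound $\din(d) \leq c(d)$ is inherited from feasibility, while $\dout(d) = 0 \leq 1$ because an outgoing arc from $d$ would have its tail in $D$, contradicting $M \subseteq P \times D$. Symmetrically, for any passenger $p \in P$ the bound $\dout(p) \leq 1$ is inherited from feasibility, while $\din(p) = 0 \leq c(p)$, using that $c(p) \in \N$ so $c(p) \geq 0$, since an incoming arc to $p$ would have its head in $P$, again contradicting $M \subseteq P \times D$. Because $P$ and $D$ partition $V$, these two cases exhaust all vertices, so both super-matching inequalities hold everywhere and $M$ is a super-matching.

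I do not anticipate a genuine obstacle: the statement is essentially an unfolding of definitions, and the single point requiring care is the ``complementary'' inequality at each vertex type, namely the out-degree at drivers and the in-degree at passengers. Both of these follow immediately from the $P \times D$ structure of $M$ rather than from the capacity or out-degree constraints, so no real argument beyond this structural remark is needed.
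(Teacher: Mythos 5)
Your proof is correct and matches what the paper intends: the paper states this observation without proof (prefacing it with ``Clearly''), treating it as an immediate unfolding of the definitions, which is exactly what you have spelled out. Your careful handling of the complementary inequalities ($\dout(d)=0$ for drivers and $\din(p)=0$ for passengers, both forced by $M \subseteq P \times D$) is the right and only argument needed.
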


A maximum super matching can be found efficiently by the following reduction 
to a maximum weight flow problem: 
Let $N = (G', s, t, c', w')$ be a flow network, where 
\begin{align*}
G'				& = (P \cup D \cup \{s, t\}, A_{sp} \cup A_{pd} \cup A_{dt})	\\
P				& = \{p_v : v \in V\}					\\
D				& = \{d_v : v \in V\}					\\
A_{sp}			& = \{ (s, p_v) : p_v \in P \}			\\
A_{pd}			& = \{ ((p_u, d_v)) : (u, v) \in A \}	\\
A_{dt}			& = \{ (d_v, t) : d_v \in D \}			\\
c'(s, p_v)		& = c'(p_u, d_v) = 1					\\
c'(d_v, t)		& = c(v)								\\
w'(p_u, d_v)	& = 
\begin{cases}
w(u, v) & \text{if } (p_u, d_v) \in A_{pd} \\
0 & \text{otherwise}	
\end{cases}
\end{align*}

That is, 
we construct a bipartite graph where the left side represents each vertex 
in $V$ being a passenger,
and the right side represents each vertex in $V$ being a driver.
Figure~\ref{fig:cm-flow} illustrates this flow network.
\begin{figure}
\centering
\begin{tikzpicture}[every node/.style={default node}]

\node(s) at (0,0) {s};

\node[draw=none]() 		at (3,2.2) {$P$};
\node(p0) 				at (3,1.6) {$p_0$};
\node[draw=none](pdots1)at (3,.8) {$\vdots$};
\node(pi) 				at (3,0) {$p_i$};
\node[draw=none](pdots2)at (3,-.8) {$\vdots$};
\node(pl) 				at (3,-1.6) {$p_n$};

\node[draw=none]() 		at (7,2.2) {$D$};
\node(d0) 				at (7,1.6) {$d_0$};
\node[draw=none](pdots)	at (7,.8) {$\vdots$};
\node(dj) 				at (7,0) {$d_j$};
\node[draw=none](pdots)	at (7,-.8) {$\vdots$};
\node(dm) 				at (7,-1.6) {$d_n$};

\node(t) at (10,0) {t};

\draw[->] (s) -- (pi)
node[label above] {$w' = 0$}
node[label below] {$c' = 1$};

\draw[->] (pi) -- (dj)
node[label above, above=-5mm] {$w' = w(i, j)$}
node[label below] {$c' = 1$};

\draw[->] (dj) -- (t)
node[label above] {$w' = 0$}
node[label below, below=-3mm] {$c' = c(j)$};

\newcommand{\edots}[2]{
\path (#1) -- (#2)
node[label, pos=0.1, anchor=center] {$\cdots$}
node[label, pos=0.9, anchor=center] {$\cdots$};
}

\edots{s}{p0}
\edots{s}{pl}

\edots{p0}{d0}
\edots{p0}{dj}
\edots{pi}{d0}
\edots{pi}{dm}
\edots{pl}{dj}
\edots{pl}{dm}

\edots{d0}{t}
\edots{dm}{t}

\end{tikzpicture}
\caption{
\label{fig:cm-flow}
Illustration of the flow network that is used to find a super-matching.
}
\end{figure}
One can verify that this is indeed a (integral) flow network and that there is a
straight forward translation between a flow and a super matching with the same weight.
	\subsection{3-approximation}
We now present a 3-approximation algorithm for the \textsc{\CARPOOL{}} problem.
This algorithm acts in two phases.
In the first phase it computes a maximum super-matching of $G$, 
in the second phase it decomposes the super-matching into 3 feasible
carpool matching and output the best of them.

We now describe how a super-matching can be decomposed into 3 feasible carpool
matching.
First, consider the graph obtained by an optimal super-matching.
Recall that in a super matching the out degree of every vertex is at most 1,
that is, the graph obtained by an optimal super matching is a pseudoforest -
every connected component has at most one cycle.
We now eliminate cycles from the super-matching by removing one edge from every
connected component.
It is easy to see that the resulting graph is a forest of anti-arborescences.
Each of these anti-arborescences can be, in turn, decomposed into two disjoint
feasible carpool matching.
This can be done, for example, by coloring each such anti-arborescences with two
colors, say red and green, and then consider the two solutions: one where the
green nodes are the drivers, and the other where the red nodes are the drivers.
We describe the algorithm in Algorithm~\ref{alg:cm3}, 
and illustrate it in Figure~\ref{fig:spanning-bipartite-graph}.   

\begin{algorithm}[t]
\KwIn{$G = (V, A), c : V \rightarrow \N, w : A \rightarrow \R$}											 
\KwOut{$(M \subseteq A)$}

$M \leftarrow \emptyset$								\\
$G' = (V, A') \leftarrow \text{superMatching($G$)}$				\\

\For{every connected component $C_i = (V_i, A_i) \in G'$}{	
	Eliminate the cycle in $C_i$ by removing an arc $a_i$							\\
	Decompose the remains anti-arborescences into two solutions, $M^i_1$, $M^i_2$	\\
	$M \gets M \cup \argmax_{F \in \{ \{e\}, M_1, M_2 \}}{w(F)}$	
}

\Return $M$
\caption{
\label{alg:cm3}
SuperMatching}
\end{algorithm}

\begin{figure}
\centering
\tikzset{
	span/.style={every node/.style={default node}, very thick},
	L/.style={red, dashed},
	R/.style={draw=blue},
}
\subfloat[][]{
\label{subfloat:graph}
\begin{tikzpicture}[span, ->]
\node(1) at (0,0) {1};
\node(2) at (2,0) {2};
\node(3) at (0,2) {3};
\node(4) at (2,2) {2};
\node(5) at (1,3.5) {1};

\draw(1) -- node[label, below]{4} (2);
\draw(1) -- node[label, above]{3} (3);
\draw(1) -- node[label, above]{5} (4);
\draw(2) -- node[label, above]{2} (4);
\draw(3) -- node[label, above]{2} (5);
\draw(4) -- node[label, above]{1} (3);
\draw(5) -- node[label, above]{4} (4);
\end{tikzpicture}}
\hfill\subfloat[][]{
\label{subfloat:super}
\begin{tikzpicture}[span, ->]
\node(1) at (0,0) {1};
\node(2) at (2,0) {2};
\node(3) at (0,2) {3};
\node(4) at (2,2) {2};
\node(5) at (1,3.5) {1};

\draw(1) -- node[label, below]{4} (2);
\draw(2) -- node[label, above]{2} (4);
\draw(3) -- node[label, above]{2} (5);
\draw(4) -- node[label, above]{1} (3);
\draw(5) -- node[label, above]{4} (4);
\end{tikzpicture}}
\hfill\subfloat[][]{
\label{subfloat:spanning}
\begin{tikzpicture}[span, ->]
\node[L](1) at (0,0) {1};
\node[R](2) at (2,0) {2};
\node[L](3) at (0,2) {3};
\node[L](4) at (2,2) {2};
\node[R](5) at (1,3.5) {1};

\draw(1) -- node[label, below]{4} (2);
\draw(2) -- node[label, above]{2} (4);
\draw(3) -- node[label, above]{2} (5);
\draw(5) -- node[label, above]{4} (4);

\end{tikzpicture}}
\hfill\subfloat[][]{
\label{subfloat:matching}
\begin{tikzpicture}[span, ->]
\node[L](1) at (0,0) {1};
\node[R](2) at (2,0) {2};
\node[L](3) at (0,2) {3};
\node[L](4) at (2,2) {2};
\node[R](5) at (1,3.5) {1};

\draw(1) -- node[label, below]{4} (2);
\draw(3) -- node[label, above]{2} (5);
\end{tikzpicture}}
\hfill
\caption[]{
\label{fig:spanning-bipartite-graph}
Illustration of the SuperMatching algorithm:
\subref{subfloat:graph} a directed graph. 
\subref{subfloat:super} a maximum super-matching.  
\subref{subfloat:spanning} an anti-arborescences:
$M_1$ is the set of arcs exiting red, dashed vertices, 
and $M_2$ is the set of arcs exiting blue vertices.
\subref{subfloat:matching} a feasible carpool matching with total value of 6.   
}
\end{figure}
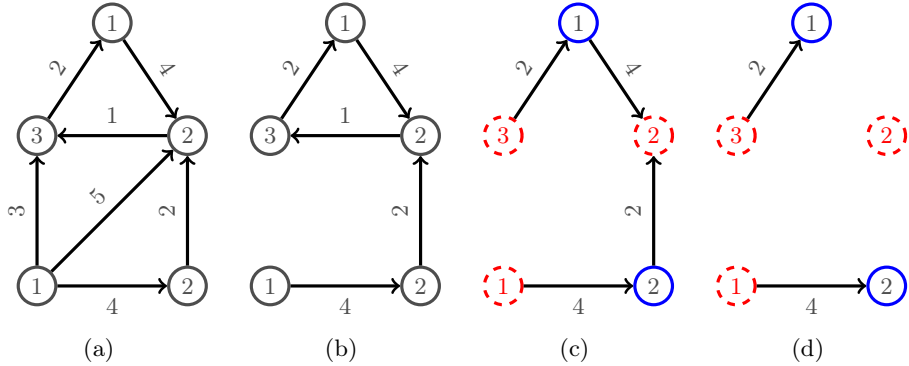

\begin{theorem}
Algorithm~\ref{alg:cm3} achieves a 3-approximation ratio.
\end{theorem}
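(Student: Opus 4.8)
The plan is to compare the output $M$ of Algorithm~\ref{alg:cm3} against an optimal carpool matching $M^*$ through the intermediate quantity given by the maximum super-matching. First I would observe that, by Observation~\ref{ob:matching is super}, the optimal carpool matching $M^*$ is itself a feasible super-matching; since the algorithm computes a \emph{maximum}-weight super-matching $S$, this immediately gives $w(S) \ge w(M^*)$. Hence it suffices to prove that the returned matching satisfies $w(M) \ge \tfrac{1}{3} w(S)$.

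Next I would pin down the structure of $S$. Because every vertex has out-degree at most $1$ in a super-matching, each connected component $C_i$ contains at most one cycle, so $S$ is a pseudoforest. Removing the single arc $a_i$ on the cycle of $C_i$ (as the algorithm does) leaves a structure in which every vertex still has out-degree at most $1$ and which is acyclic; following out-arcs therefore leads every vertex to a unique sink, i.e.\ the component becomes an anti-arborescence. I would then $2$-color the underlying (bipartite) tree with the colors red and green and let $M_1^i$ be the arcs leaving red vertices and $M_2^i$ the arcs leaving green vertices. The key verification here is feasibility: in $M_1^i$ only red vertices have out-arcs and only green vertices receive them, so the red vertices can serve as passengers and the green vertices as drivers with no vertex playing both roles; moreover $M_1^i \subseteq S$, so the capacity constraint $\din(v) \le c(v)$ is inherited. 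The same holds for $M_2^i$, and the single arc $\{a_i\}$ is trivially feasible.

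The counting step is then short. By construction the three pieces partition the arcs of the component, so
\begin{equation*}
w(a_i) + w(M_1^i) + w(M_2^i) = w(C_i).
\end{equation*}
Since the algorithm adds $\argmax_{F \in \{\{a_i\}, M_1^i, M_2^i\}} w(F)$, whose weight is at least the average of the three values, the contribution of $C_i$ to $M$ is at least $\tfrac{1}{3} w(C_i)$. Summing over the vertex-disjoint components yields $w(M) \ge \tfrac{1}{3}\sum_i w(C_i) = \tfrac{1}{3} w(S) \ge \tfrac{1}{3} w(M^*)$, which is the claim. Unmatched vertices can be placed into $P$ arbitrarily, so $M$ extends to a genuine carpool matching on all of $V$.

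I expect the main obstacle to be the feasibility bookkeeping rather than the arithmetic: one must argue carefully that the two-coloring of each anti-arborescence really does separate the passenger and driver roles (no vertex is simultaneously a tail and a head within a single $M_j^i$), and that taking the union of the chosen pieces across distinct components preserves feasibility. Since the components are vertex-disjoint and each chosen piece is a subset of $S$, both the out-degree bound and the capacity bound are preserved globally, so this reduces to the per-component check above.
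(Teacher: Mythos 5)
Your proposal is correct and follows essentially the same route as the paper: bound the optimum by the maximum super-matching weight (via Observation~\ref{ob:matching is super}), decompose each component into the removed cycle arc plus the two color classes of the anti-arborescence, and take the best of the three, which is at least a third of the total. Your per-component averaging and the explicit feasibility check of the two-coloring are just more detailed renderings of the paper's global max argument, so there is no substantive difference.
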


\begin{proof}
Let $M_a = \bigcup_i \{a_i\}$ be the set of all removed arcs in the cycle
elimination phase.
Let $M_1 = \bigcup_i M^i_1$, and $M_2 = \bigcup_i M^i_2$.
Clearly, $M_a \cup M_1 \cup M_2 = A'$, 
and that $\max(w(M_a), w(M_1), w(M_2)) \geq
\frac{w(A')}{3}$.
The observation that the weight of a maximum super-matching is an upper bound on
the weight of a maximum carpool matching finishes the proof. 
\end{proof}

To see that our analysis is tight, consider the example in
Figure~\ref{fig:3cm-tight-fig}. 
Assume, for the given graph in the figure, 
that all weights are 1 and that there is no capacity constraint.
The maximum matching, then, is 3 ($\{(1,4), (2,4), (3,4)\}$), 
but the algorithm can return the super matching $\{(1,2), (2,3), (3,1)\}$ from which
only one arc can survive.  

\begin{figure}
\centering
\begin{tikzpicture}

\node(n1)[default node] at (0,0) {1};
\node(n2)[default node] at (2,0) {2};
\node(n3)[default node] at (4,0) {3};
\node(n4)[default node] at (2,1.1) {4};

\draw[->, very thick] (n1) -- (n2);
\draw[->, very thick] (n2) -- (n3);
\draw[->, very thick] (n3) to[bend left] (n1);

\foreach \i in {1,...,3}
\draw[->, dashed, very thick] (n\i) -- (n4);

\end{tikzpicture}
\caption{
\label{fig:3cm-tight-fig}
Super Matching Algorithm, worst case example
}
\end{figure}
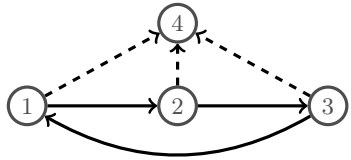

\bibliographystyle{plain}
\bibliography{main}

\end{document}